\def\Box{\vcenter{\vbox{\hrule\hbox{\vrule
     \vbox to 8.8pt{\hbox to 10pt{}\vfill}\vrule}\hrule}}}
\newcommand{\tr}{\textup{Tr}}
\newcommand{\al}{\alpha}
\newcommand{\eN}{{\mathcal N}}
\newcommand{\eL}{{\mathcal L}}
\newcommand{\eH}{{\mathcal H}}
\newcommand{\C}{{\mathbb C}}
\newcommand{\F}{{\mathbb F}}
\newcommand{\ra}{\rangle}
\newtheorem{thm}{Theorem}
\newtheorem{lemma}[thm]{Lemma}
\newtheorem{definition}[thm]{Definition}
\numberwithin{equation}{section}
\begin{document}

\title{Quantum channels from association schemes}


\author{Tao Feng}
\address{Zhejiang University}
\curraddr{}
\email{tfeng@zju.edu.cn}
\thanks{TF is supported in part by the Fundamental Research Funds for the Central
Universities of China, Zhejiang Provincial Natural Science Foundation (LQ12A01019), National Natural Science
Foundation of China (11201418).}

\author{Simone Severini}
\address{University College London}
\curraddr{}
\email{simoseve@gmail.com}
\thanks{SS is supported by the Royal Society.}

\subjclass[2000]{05E30, 81P45, 94A15}

\keywords{association schemes; quantum channels; non-commutative graphs; zero-error information theory}

\date{}

\dedicatory{}

\begin{abstract}
We propose in this note the study of quantum channels from association schemes. This is done by interpreting the $(0,1)$-matrices of a scheme as the Kraus operators of a channel. Working in the framework of one-shot zero-error information theory, we give bounds and closed formulas for various independence numbers of the relative non-commutative (confusability) graphs, or, equivalently, graphical operator systems. We use pseudocyclic association schemes as an example. In this case, we show that the unitary entanglement-assisted independence number grows at least quadratically faster, with respect to matrix size, than the independence number. The latter parameter was introduced by Beigi and Shor as a generalization of the one-shot Shannon capacity, in analogy with the corresponding graph-theoretic notion.
\end{abstract}

\maketitle

\section{Introduction}

Association schemes have traditionally an important role in information theory and coding \cite{Delsarte98}. This is a short paper in which we introduce a class of quantum channels based on association schemes. We study the first properties of these channels concerned with zero-error capacities \cite{Korner89}. It has been recently shown that the Shannon capacity of a graph can be improved if sender and receiver share a certain type of quantum state \cite{Leung10}. This newly introduced capacity, which is called entanglement-assisted capacity, is conjectured to be equal to the Lov\'{a}sz $\vartheta$-function for classical channels; a quantum version of the $\vartheta$-function upper bounds the analogue capacity for quantum channels \cite{Duan12}. The largest number of messages that can be distinguished without error down a quantum (or even a classical) channel corresponds to the dimension of a certain subspace, whose properties are completely determined by the channel map. The map defines an operator system \cite{Paul03} as a generalization of the confusability graph in classical Shannon zero-error information theory. The system records which pairs of inputs can lead to the same output. The system, also called non-commutative graph in \cite{Duan12}, is a well-defined generalization of graphs, since it encodes faithfully a graph whenever the Kraus operators represent the edges. By interpreting the $(0,1)$-matrices of an association scheme as the Kraus operators, we show that these notions of quantum information theory extends to association schemes. We work with pseudocyclic association schemes -- this appears to be the easiest case to treat. We are able to prove that the unitary entanglement-assisted independence number grows at least quadratically faster, with respect to matrix size, than the independence number introduced by Beigi and Shor as a generalization of the one-shot Shannon capacity, which naturally corresponds to the well-known graph-theoretic notion of independence number \cite{Beigi07}.

We recall the formal definition of an association scheme in the next section. The Kraus operators of quantum channels from symmetric association schemes commute. As a consequence, it is particularly easy to study properties of these channels. It is worth to remark that the link suggests a new set of parameters for association schemes. Quantum independence number and unitary entanglement-assisted independence number are discussed in sections 4 and 5, respectively. Our notation is the same as in \cite{Duan12}.

\section{Quantum channels from association schemes}

\subsection{Association schemes}
An \emph{association scheme} with $d$-classes is a set of $(0,1)$-matrices $\mathcal{A}=\{A_0,\cdots, A_d\}$ satisfying the following four properties: (1) $A_0=I$, where $I$ is the identity matrix; (2) $\sum_{i=0}^d A_i=J$, where $J$ is the all-ones matrix; (3) $A_i^T\in\mathcal{A}$, where $A_i^T$ is the transpose of $A_i$; (4) $A_iA_j\in span{\mathcal{A}}$. The $\C$-linear span of $A_0,A_1,\cdots,A_d$ forms a semisimple  algebra of dimension $d+1$, called the {\it Bose-Mesner algebra} of the scheme.
Since each $A_i$ is symmetric, this algebra is commutative. Given that this algebra is closed under complex conjugation and contains the identity, there exists a set of minimal idempotents $E_0,E_1,\cdots,E_d$  which also forms a basis of the algebra.  We have $E_i^\dag=E_i$ for each $i$, $E_iE_j=\delta_{i,j}E_i$. An association scheme is called \emph{symmetric} if $A_i^T=A_i$ for each $i$. A symmetric association scheme must be commutative.

In our examples, we will consider a special type of association schemes called pseudocyclic. A scheme is \emph{pseudocyclic} if $rank(E_i)$'s are all equal other than $rank(E_0)$, where $rank(E_0)=1$. Let us describe the cyclotomic association schemes, which are pseudocyclic. Let $q=p^f$, where $p$ is a prime and $f$ a positive integer. Let $\gamma$ be a fixed primitive element of $\F_q$ and $d|(q-1)$ with $d>1$. Let $C_0=\langle \gamma^d\rangle$, and $C_i=\gamma^i C_0$ for $1\leq i\leq d-1$. In particular, when $d=2$, $C_0$ is exactly the nonzero squares in $\F_q$.  Define a set of $q\times q$ matrices with rows and columns labeled by elements of $\F_q$ as follows: set $A_0=I$, and for each $i\in \{1,2,\ldots ,d\}$, define $A_i$ to have its $(x,y)$-th entry equal to $1$ if $x-y\in C_{i-1}$ and $0$ otherwise. Then $\{A_i\}_{0\leq i\leq d}$ is a pseudocyclic association scheme. A standard reference on association schemes is \cite{Bannai84}.

\subsection{Non-commutative graphs}
Let $\eL(\eH_A)$ and $\eL(\eH_B)$ be the spaces of linear operators on the finite dimensional Hilbert spaces $\eH_A$ and $\eH_B$. A quantum channel is a map which describes the dynamical evolution of states in a quantum system. Formally, a \emph{quantum channel} is a map $\eN:$ $\eL(\eH_A) \rightarrow \eL(\eH_B)$ that is complete positive and trace preserving (for short CPTP). A CPTP map is represented by a set of Kraus operators $F_i: \eH_A\rightarrow \eH_B$ such that $\eN(\rho)=\sum_iF_i\rho F_i^\dag$ and $\sum_i F_iF_i^\dag=I$. The \emph{non-commutative graph} associated with the channel $\eN$ is the operator subspace $S=span\{F_i^\dag F_j:\,i,j\}<\eL(\eH_A)$. Observe that $S=S^\dag$ and $I\in S$. A non-commutative graph is special type of operator system \cite{Duan12, Paul03}. A non-commutative graph represents the confusability relations between the input/output symbols of the quantum channel. It has the same role as a graph for a stationary memoryless channel in Shannon zero-error information theory \cite{Shannon56}. Indeed, for this type of channels, non-commutative graphs reduce to graphs.

\begin{definition}
The \emph{independence number} of $S$, denoted by $\al(S)$, is defined as the integer $\al(S)=\max|\{|\phi_i\rangle: i=1,\cdots,N\}|$, such that $\langle\phi_i|\phi_j\rangle=0$ and $|\phi_i\rangle\langle\phi_j|\in S^\perp$, $i\ne j$.
\end{definition}

The operator space $S^\perp$ is with respect to the Hilbert-Schmidt inner product on $\eL(\eH_A)$ defined as $\langle X,Y\rangle=\tr(XY^\dagger)$ with $X,Y\in \eL(\eH_A)$. One trivial observation here is that $\al(S)\leq dim_\C(\eH_A)$: if a set of nonzero vectors in $\eH_A$ are pairwise orthogonal, then they are linearly independent. Computing the number $\al(S)$ is QMA-complete \cite{Beigi07}; $\al(S)$ can be seen as a generalization of the independence number for graphs, whose computation is well-known to be NP-complete.

\subsection{Quantum channels}
We define a class of quantum channels from association schemes. Given an association scheme $\mathcal{A}$, the set of matrices $A_0,A_1,\cdots,A_d$ defines the Kraus operators of a CPTP map. We then define $S_{\mathcal{A}}=span\{A_i^\dag A_j:\,i,j\}$ . Clearly, we have $S_{\mathcal{A}}=span\{E_i:\,0\leq i\leq d\}$: this follows from the fact that $\mathcal{A}$ is an algebra closed under taking transpose, and the $E_i$'s form a basis.

\begin{thm}\label{a_as}
Let $\mathcal{A}$ be an association scheme with matrices of size $N\times N$. Then, $\al(S_{\mathcal{A}})=N$.
\end{thm}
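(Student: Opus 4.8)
The plan is to prove the two inequalities $\alpha(S_{\mathcal{A}}) \le N$ and $\alpha(S_{\mathcal{A}}) \ge N$ separately. The upper bound is already recorded as the trivial observation $\alpha(S) \le \dim_\C(\eH_A) = N$: any family of pairwise orthogonal nonzero vectors is linearly independent. So the real content is the lower bound, for which I must exhibit $N$ nonzero, pairwise orthogonal vectors $|\phi_1\rangle,\dots,|\phi_N\rangle$ with $|\phi_i\rangle\langle\phi_j| \in S_{\mathcal{A}}^\perp$ for all $i \ne j$.

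First I would translate the membership condition into a statement about matrix entries. Since $S_{\mathcal{A}} = \mathrm{span}\{E_k : 0 \le k \le d\}$ and the $E_k$ are Hermitian, it suffices to test $S^\perp$-membership against the spanning set: $|\phi_i\rangle\langle\phi_j| \in S_{\mathcal{A}}^\perp$ holds if and only if $\langle |\phi_i\rangle\langle\phi_j|, E_k\rangle = \tr(|\phi_i\rangle\langle\phi_j| E_k) = \langle \phi_j | E_k | \phi_i\rangle = 0$ for every $k$. Searching for the $|\phi_i\rangle$ among the columns of a single unitary, i.e. setting $|\phi_i\rangle = U|e_i\rangle$ for the standard basis $\{|e_i\rangle\}$ and a unitary $U$ to be chosen, this condition reads $(U^\dagger E_k U)_{j,i} = 0$ for all $i \ne j$ and all $k$; equivalently, $U^\dagger E_k U$ is diagonal for every $k$. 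The orthonormality of the columns of $U$ simultaneously guarantees, for free, that the $|\phi_i\rangle$ are nonzero and pairwise orthogonal.

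The remaining task is therefore to produce one unitary $U$ that simultaneously diagonalizes all the idempotents $E_0,\dots,E_d$. This is where the structure of the Bose--Mesner algebra enters: the $E_k$ are Hermitian ($E_k^\dagger = E_k$) and mutually commuting, since $E_iE_j = \delta_{i,j}E_i = E_jE_i$. A commuting family of Hermitian operators on $\eH_A$ is simultaneously unitarily diagonalizable by the spectral theorem; concretely, the orthogonal projections $E_k$ resolve $\eH_A$ into an orthogonal direct sum of joint eigenspaces, and any orthonormal basis adapted to this decomposition supplies the columns of such a $U$. Taking $|\phi_i\rangle = U|e_i\rangle$ for $i = 1,\dots,N$ then yields the required $N$ vectors, giving $\alpha(S_{\mathcal{A}}) \ge N$ and hence equality.

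I do not expect a genuine obstacle here: once the $S^\perp$ condition is recast as simultaneous diagonalizability, the conclusion is immediate from commutativity of the idempotents. The one point that deserves care is the direction of the translation --- verifying that diagonality of each $U^\dagger E_k U$ is exactly equivalent to (not merely sufficient for) the vanishing of all the off-diagonal quantities $\langle \phi_j|E_k|\phi_i\rangle$, and checking that it is legitimate to restrict attention to the spanning set $\{E_k\}$ rather than all of $S_{\mathcal{A}}$.
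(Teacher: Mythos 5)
Your proof is correct and is essentially the paper's own argument in different clothing: the unitary $U$ that simultaneously diagonalizes the commuting Hermitian idempotents $E_0,\dots,E_d$ has as its columns exactly an orthonormal basis adapted to the orthogonal decomposition $\eH_A=\oplus_k W_k$ into the images of the $E_k$, which is the independent set the paper constructs directly. Both arguments reduce $S^\perp$-membership to testing against the spanning set $\{E_k\}$ and then exploit the resolution of the identity $\sum_k E_k = I$, so there is no substantive difference.
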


\begin{proof}
Let $W_k$ be the image of the projection $E_k$. Assume that $|\phi_i\rangle$, $1\leq i\leq N$ form an ``independent set", namely $\langle\phi_i|\phi_j\rangle=0$ and $|\phi_i\rangle\langle\phi_j|\in S^\perp$ for each $i\ne j$. Decompose each $|\phi_k\rangle$ into the direct orthogonal sum $\phi_k=\sum_i \phi_{k,i}$, with $\phi_{k,i}=E_i|\phi_k\rangle \in W_i$. First, it is clear that $S_{\mathcal{A}}=span\{E_i:\,0\leq i\leq d\}$. Second, we look at the condition of $|\phi_i\rangle\langle\phi_j|\in S^\perp$: it amounts to $\tr(E_k|\phi_i\rangle\langle\phi_j|)=0$ for all $k$. Since $E_k^\dag=E_k=E_k^\dag E_k$, we have
\[
\tr(E_k|\phi_i\rangle\langle\phi_j|)=\tr(E_k|\phi_i\rangle\langle\phi_j|E_k^\dag)=\tr(|\phi_{i,k}\rangle\langle\phi_{j,k}|)=\langle\phi_{i,k}|\phi_{j,k}\rangle=0.
\]
Summing over $k$, this also guarantees that $\langle\phi_i|\phi_j\rangle=0$, since $I=\sum_i  E_i$.

Now we see that actually the nonzero vectors in the set $\{\phi_{k,i}: 1\leq k\leq N,0\leq i\leq d\}$ also satisfies the independence condition, and contain at least $N$ nonzero element. It is now routine to get such an independent set: take an orthonormal basis of each $W_k$ and take union. This shows the result.
\end{proof}

\noindent{\bf Remark:} This shows that the bound $\al(S_{\mathcal{A}})\leq N$ in \cite{Duan12} is tight. This observation justifies the study of quantum channels from association schemes, when attempting to understand what properties of a non-commutative graph are responsible for a separation between its classical capacity and capacities achievable with the aid of quantum resources.

\section{Channels with commuting Kraus operators}

We shall consider quantum channels satisfying the following hypothesis, which contains the symmetric association scheme case as a special case.\\

\noindent{\bf Hypothesis:}
\begin{enumerate}
\item The Hilbert space $\eH_A=\eH_B$, and they are finite dimensional over the  complex numbers $\C$, so that we regard $F_i$'s as matrices;
\item  The matrices $F_i$'s are normal matrices, i.e., $F_i^\dag F_i=F_i F_i^\dag$, so that each of them can be diagonalized;
\item The $F_i$'s commute, so that  they can be simultaneously diagonalized.
\end{enumerate}

\begin{lemma}\label{lem_h}
Under the above Hypothesis, the following properties hold:
\begin{enumerate}
\item  There exist pairwise orthogonal subspaces of $\eH_A$, denoted by $W_i$, $0\leq i\leq d$, and  numbers $\theta_k(i)$ such that
    \[
      F_k|v\ra=\sum_{i=0}^d\theta_k(i)|v\rangle,\;\forall\, k,\;\forall\, |v\rangle\in W_i.
    \]
    Moreover, for distinct $i,i'$, there exists $k$ such that $\theta_k(i)\neq\theta_k(i')$.
\item Denote by $E_i$ the projection onto $W_i$. We have  $E_i^\dag=E_i$, $E_iE_j=\delta_{i,j}E_i$, and $\sum_i E_i=I$.

\item  We have $F_k=\sum_{i=0}^m\theta_k(i)E_i$. Then $F_k^\dag=\sum_{i=0}^d\overline{\theta_k(i)}E_i$ , and
    \[
      F_k^\dag F_l=\sum_{i=0}^d\overline{\theta_k(i)}\theta_l(i)E_i.
    \]
\item $S:=span\{F_i^\dag F_j:\;i,j\}\leq span\{E_k:\;0\leq k\leq d\}.$
\end{enumerate}
\end{lemma}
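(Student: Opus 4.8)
The plan is to build everything on the spectral theorem for a commuting family of normal operators, which the Hypothesis supplies: since the $F_k$ are normal and pairwise commuting, they admit a \emph{simultaneous unitary diagonalization}, i.e.\ a single orthonormal basis of $\eH_A$ consisting of common eigenvectors. I would group these basis vectors according to their joint eigenvalue tuple, declaring two common eigenvectors to lie in the same group exactly when $F_k$ has the same eigenvalue on both for every $k$. Letting $W_0,\dots,W_d$ be the spans of these groups (the joint eigenspaces) and writing $\theta_k(i)$ for the common eigenvalue of $F_k$ on $W_i$, one gets $F_k|v\rangle=\theta_k(i)|v\rangle$ for every $|v\rangle\in W_i$, which is part (1). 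The separation clause is then automatic: by the very definition of the grouping, distinct indices $i\neq i'$ differ in at least one coordinate, so there is some $k$ with $\theta_k(i)\neq\theta_k(i')$.

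Parts (2)--(4) are then bookkeeping. For (2) I take $E_i$ to be the orthogonal projection onto $W_i$; orthogonal projections are self-adjoint and idempotent, giving $E_i^\dag=E_i$ and $E_i^2=E_i$, while mutual orthogonality of the $W_i$ (they are spanned by disjoint blocks of one orthonormal basis) yields $E_iE_j=\delta_{i,j}E_i$, and exhaustion of $\eH_A$ by the eigenbasis gives $\sum_i E_i=I$. For (3), since each $F_k$ acts as the scalar $\theta_k(i)$ on $W_i$ and the $W_i$ decompose $\eH_A$ orthogonally, the spectral decomposition reads $F_k=\sum_i\theta_k(i)E_i$; taking adjoints and using $E_i^\dag=E_i$ gives $F_k^\dag=\sum_i\overline{\theta_k(i)}E_i$, and multiplying two such expressions and applying $E_iE_j=\delta_{i,j}E_i$ collapses the double sum to $F_k^\dag F_l=\sum_i\overline{\theta_k(i)}\theta_l(i)E_i$. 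Part (4) is then immediate, as every generator $F_k^\dag F_l$ of $S$ lies in $\mathrm{span}\{E_k\}$.

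The single point deserving genuine care, rather than routine computation, is the step from ``common eigenbasis'' to ``common \emph{orthonormal} eigenbasis'': commutativity alone produces a simultaneous diagonalizing basis, but parts (2)--(4) all rest on the $W_i$ being mutually \emph{orthogonal}, equivalently on the $E_i$ being self-adjoint. This is exactly where normality is indispensable. For a normal $F_k$ one has $F_k^\dag=p(F_k)$ for a suitable interpolating polynomial $p$ (interpolate $\lambda\mapsto\overline\lambda$ on the spectrum), so any operator commuting with $F_k$ also commutes with $F_k^\dag$ — the finite-dimensional Fuglede phenomenon. Hence restricting the family to any eigenspace of one $F_k$ keeps it a commuting family of normal operators, and an induction on $\dim\eH_A$ delivers the desired simultaneous unitary diagonalization with orthogonal joint eigenspaces. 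I would isolate this as the crux and treat everything downstream as a direct consequence.
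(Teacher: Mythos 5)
Your proposal is correct and follows essentially the same route as the paper, which simply cites ``the properties of commuting normal matrices'' for part (1) and treats (2)--(4) as bookkeeping; you fill in the spectral-theorem details (joint eigenspaces, the Fuglede-type step guaranteeing orthogonality) that the paper leaves implicit.
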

\begin{proof} (1) follows from the properties of commuting normal matrices, as in the association scheme case \cite{Bannai84}. For (2), take an orthonormal basis $v_l$ of $W_i$, so that $E_i=\sum_l |v_l\rangle\langle v_l|$. It follows that $E_i^\dag=E_i$. (3) follows from (1), and (4) follows from (3).
\end{proof}

The following result generalizes Theorem \ref{a_as}.

\begin{thm}  Under the above Hypothesis, $\al(S) = dim_\C(\eH_A)$.
\end{thm}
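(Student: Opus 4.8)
The plan is to follow the structure of the proof of Theorem \ref{a_as} almost verbatim, since Lemma \ref{lem_h} has already extracted from the Hypothesis exactly the structural facts about the $E_i$ that were used there. First I would recall from Lemma \ref{lem_h}(4) that $S \leq \mathrm{span}\{E_k : 0\leq k\leq d\}$, so that $S^\perp \supseteq (\mathrm{span}\{E_k\})^\perp$, and from Lemma \ref{lem_h}(2) that the $E_i$ are pairwise orthogonal self-adjoint projections summing to the identity, whose images $W_i$ therefore give an orthogonal decomposition $\eH_A = \bigoplus_{i=0}^d W_i$. The upper bound $\al(S)\leq dim_\C(\eH_A)$ is already noted in the paper as the trivial observation that pairwise orthogonal nonzero vectors are linearly independent, so the real content is the matching lower bound.

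For the lower bound I would construct an explicit independent set of size $dim_\C(\eH_A)$. The plan is to take, for each $i$, an orthonormal basis of $W_i$ and form the union over all $i$; this yields $\sum_i \dim_\C(W_i) = dim_\C(\eH_A)$ vectors. I must then verify the two independence conditions. Orthogonality $\langle\phi|\psi\rangle = 0$ for distinct basis vectors is immediate: within a single $W_i$ the basis is orthonormal, and vectors from distinct $W_i, W_j$ are orthogonal because $W_i \perp W_j$. For the condition $|\phi\rangle\langle\psi| \in S^\perp$, it suffices by Lemma \ref{lem_h}(4) to check $\tr(E_k |\phi\rangle\langle\psi|) = 0$ for every $k$. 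Writing $\phi \in W_i$ and $\psi \in W_j$ with $(i,\phi)\neq(j,\psi)$, I would use $E_k^\dag = E_k = E_k^\dag E_k$ to get $\tr(E_k|\phi\rangle\langle\psi|) = \langle E_k\phi \mid E_k\psi\rangle$; since $E_k\phi = \delta_{k,i}\phi$ and $E_k\psi = \delta_{k,j}\psi$, this vanishes whenever $k\neq i$, when $k\neq j$, or when $i=j=k$ (as then $\phi,\psi$ are distinct orthonormal basis vectors of $W_k$), hence in all cases.

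This confirms that the constructed union is a genuine independent set of size $dim_\C(\eH_A)$, giving $\al(S) \geq dim_\C(\eH_A)$, and combined with the trivial upper bound yields equality. I do not anticipate a serious obstacle, since Lemma \ref{lem_h} is precisely engineered so that the spectral projections $E_i$ behave exactly as in the Bose--Mesner case; the only point requiring a little care is that the projections $E_i$ for the commuting-normal Hypothesis need not coincide with $(0,1)$-matrices, so I would phrase the argument purely in terms of the abstract properties $E_i^\dag = E_i$, $E_iE_j = \delta_{i,j}E_i$, $\sum_i E_i = I$ rather than appealing to any association-scheme combinatorics. In effect the whole argument reduces to the observation that $\al(S)$ for an operator system spanned by a complete set of orthogonal projections equals the ambient dimension, which is what Theorem \ref{a_as} established in the special case and what Lemma \ref{lem_h} lets us transport to the present setting.
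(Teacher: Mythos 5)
Your proposal is correct and follows essentially the same route as the paper: both take the union of orthonormal bases of the $W_i$ as the independent set and invoke the trivial upper bound $\al(S)\leq \dim_\C(\eH_A)$. The only cosmetic difference is that the paper verifies $|\phi\rangle\langle\psi|\in S^\perp$ by computing $\langle\phi_j|F_k^\dag F_l|\phi_i\rangle$ directly via the eigenvalues $\theta_k(i)$ from Lemma \ref{lem_h}(3), whereas you reduce to tracing against the projections $E_k$ via Lemma \ref{lem_h}(4), exactly as in the proof of Theorem \ref{a_as}; the two verifications are equivalent.
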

\begin{proof} Adopt the same notation as in Lemma \ref{lem_h}.  We take an orthonormal basis of each $W_i$ and take union to form a set $T$. Clearly $T$ has size dim$_{\C}(\eH_A)$ and consists of pairwise orthogonal vectors. For any two distinct vectors $\phi_i$, $\phi_j$ in $T$ such that $\phi_i\in W_i$, $\phi_j\in W_j$, we have \[\langle\phi_j|F_k^\dag F_l|\phi_i\rangle=\overline{\theta_k(j)}\theta_{l}(i)\langle\phi_j|\phi_i\rangle=0,\]  so $|\phi_i\rangle\langle\phi_j|\in S^\perp$. Since $\al(S)\leq dim_\C(\eH_A)$, we see that the equality actually holds.
\end{proof}

\section{Quantum independence number}

Let us recall the definition of quantum independence number \cite{Duan12}.

\begin{definition}
The \emph{quantum independence number} of $S$, denoted by $\al_q(S)$, is defined as the maximum dimension of a subspace $A'\leq \eH_A$, with projection operator $P$, such that $PSP=\C P$.
\end{definition}

This means that there exists $\lambda_X\in \C$ such that $PXP=\lambda_X P$ for any $X\in S$. The quantum independence number corresponds to the largest dimension of a quantum error correcting code in the Knill-Laflamme setting \cite{KL97}.

\begin{thm}\label{aq_as}
 Let $\mathcal{A}$ be an $d$-class association scheme with minimal idempotents $E_i$, $0\leq i\leq d$. Then, $\al_q(S_\mathcal{A})=max\{rank(E_i):\,0\leq i\leq d\}$.
\end{thm}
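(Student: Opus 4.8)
The plan is to prove two matching inequalities. Since $S_{\mathcal{A}}=\mathrm{span}\{E_i:\,0\le i\le d\}$, the defining condition $PS_{\mathcal{A}}P=\C P$ is equivalent to saying that for the projection $P$ onto the candidate subspace $A'\le\eH_A$ there exist scalars $\lambda_i$ with $PE_iP=\lambda_iP$ for every $i$ (extend by linearity, using $I\in S_{\mathcal{A}}$). I will establish $\al_q(S_{\mathcal{A}})\ge\max_i\mathrm{rank}(E_i)$ by exhibiting an explicit admissible subspace, and $\al_q(S_{\mathcal{A}})\le\max_i\mathrm{rank}(E_i)$ by a dimension count on a restriction of a single idempotent.

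For the lower bound, let $E_j$ be an idempotent of maximal rank and take $P=E_j$, with $A'=W_j$ its image. Using $E_iE_j=\delta_{i,j}E_i$, one computes $PE_iP=E_jE_iE_j=\delta_{i,j}E_j=\delta_{i,j}P\in\C P$, so $A'$ is admissible and therefore $\al_q(S_{\mathcal{A}})\ge\dim W_j=\mathrm{rank}(E_j)=\max_i\mathrm{rank}(E_i)$.

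The substance is the upper bound. Let $P$ be any admissible projection with $PE_iP=\lambda_iP$, and let $A'$ be its image. Because each $E_i$ is an orthogonal projection, $E_i^\dag E_i=E_i$, so for every unit vector $|v\rangle\in A'$ (hence $P|v\rangle=|v\rangle$),
\[
\|E_i|v\rangle\|^2=\langle v|E_i^\dag E_i|v\rangle=\langle v|E_i|v\rangle=\langle v|PE_iP|v\rangle=\lambda_i .
\]
This forces $\lambda_i\ge 0$ for all $i$, and summing over $i$ together with $\sum_iE_i=I$ gives $\sum_i\lambda_i=1$; in particular some $\lambda_i>0$. For that index the linear map $A'\to W_i$, $|v\rangle\mapsto E_i|v\rangle$, satisfies $\|E_i|v\rangle\|^2=\lambda_i\|v\|^2$ for all $|v\rangle\in A'$, so it is injective. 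Hence $\dim A'\le\dim W_i=\mathrm{rank}(E_i)\le\max_k\mathrm{rank}(E_k)$, and combining with the lower bound yields the asserted equality.

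I do not expect a genuine obstacle here: the only delicate point is producing a strictly positive $\lambda_i$, so that the restriction of the corresponding $E_i$ to $A'$ is injective. This is exactly where the normalization $\sum_i\lambda_i=1$, which in turn comes from $\sum_iE_i=I$, is needed; every other step is routine manipulation within the Bose–Mesner algebra.
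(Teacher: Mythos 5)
Your proof is correct, and it takes a genuinely different --- and in fact more robust --- route than the paper's. The paper argues that since $P$ and $E_i$ are projections, $PE_iP$ is also a projection, forcing $\lambda_i\in\{0,1\}$, and then concludes that $A'$ must be contained in a single eigenspace $W_i$. That intermediate reasoning is not sound: a product $PE_iP$ of non-commuting projections need not be idempotent, its image is not $W_i\cap A'$ in general, and $\lambda_i$ can genuinely lie strictly between $0$ and $1$. For instance, with $E_0=|0\rangle\langle 0|$, $E_1=|1\rangle\langle 1|$ and $A'$ spanned by $\tfrac{1}{\sqrt2}(|0\rangle+|1\rangle)$, one gets $PE_0P=PE_1P=\tfrac12P$, so $A'$ is admissible yet contained in neither $W_0$ nor $W_1$ (this configuration occurs already for the $2\times 2$ one-class scheme). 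Your argument sidesteps the containment claim entirely: you only need that $\lambda_i=\|E_i|v\rangle\|^2\ge 0$ on unit vectors of $A'$, that $\sum_i\lambda_i=1$ forces some $\lambda_i>0$, and that for such an $i$ the map $|v\rangle\mapsto E_i|v\rangle$ is a scaled isometry from $A'$ into $W_i$, hence injective, giving $\dim A'\le\mathrm{rank}(E_i)$. This yields the same upper bound while being valid for every admissible $A'$, not just those aligned with the eigenspaces; your lower bound via $P=E_j$ coincides with the paper's. In short, your write-up proves the stated theorem and repairs a genuine gap in the published proof.
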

\begin{proof} Take a nontrivial subspace $A'<A$ with projection operator $P$ such that $PSP=\C P$. We have seen that $S_\mathcal{A}$ is spanned by the minimal idempotent $E_i$'s. Therefore, there exist numbers $\lambda_i\in\C$ such that $PE_iP=\lambda_i P$ for each $0\leq i\leq d$. It is clear that $\lambda_i =1$ or $0$ since $P$ and all $E_i$'s are projections/idempotents, and so is their product  $PE_iP$.

Let $W_i$ be the image of $E_i$. Since $PE_iP=\lambda_i P$ and
$PE_iP$ has image $W_i\cap A'$, we see that $W_i\cap A'$ is either $\{0\}$ ($\lambda_i=0$) or $A'$ ($\lambda_i=1$). There is at least one $i$ such that $\lambda_i=1$; otherwise, $A'$ would be $\{0\}$. If  $\lambda_i=1$, then $W_i\cap A'=A'$, and $A'\leq W_i$.  Hence $A'$ is contained in one of the  $W_i$'s. We conclude that this quantum independent number does not exceed the largest multiplicity of the  association scheme. The equality is achieved if we take $A'=W_i$ whose dimension is maximal among all $W_i$'s. Observe that $rank(E_i)=dim_\C W_i$.\\
\end{proof}

\begin{thm} Suppose that our quantum channel satisfies the Hypothesis, and take the same notations as in Lemma \ref{lem_h}. Then the quantum independent number is equal to the maximal dimension of subspaces of $\eH_A$ on which  each $F_k^\dag F_l$ operates as a scale multiple.
\end{thm}
\begin{proof} Take a subspace $A'\leq A$ with projection operator $P$ such that $PSP=\C P$, and assume that it is not contained in another subspace with this property . Set $Y:=\{i:\, W_i\cap A'\neq \{0\}\}$. Then $PE_iP$ has image $W_i\cap A'$, so is not empty for each $i\in Y$. Since $F_k^\dag F_l\in S$, there exists a number $\lambda_{k,l}$ such that
\begin{align*}
\lambda_{k,l}P&=PF_k^\dag F_lP=\sum_{i=0}^d\overline{\theta_k(i)}\theta_l(i)PE_iP\\
  &=\sum_{i\in Y}\overline{\theta_k(i)}\theta_l(i)PE_iP.
\end{align*}
We see that for each $i\in Y$, the number $\overline{\theta_k(i)}\theta_l(i)=\lambda_{k,l}$ is a constant depending on $k,l$ only.
Set $W_Y:=\oplus_{i\in Y}W_i$ with projection operator $P_Y=\sum_{i\in Y}E_i$. Then
  \[
      F_k^\dag F_lP_Y=\sum_{i\in Y}\overline{\theta_k(i)}\theta_l(i)E_i=\lambda_{k,l}P_Y.
  \]
 Therefore,  all of $F_k^\dag F_l$'s  operate as a scale multiple of $P_Y$ on $W_Y$. Moreover, $P_YF_k^\dag F_lP_Y=\lambda_{k,l}P_Y$, so $W_Y$ also has the property that $P_YSP_Y=\C P_Y$. On the other hand, clearly $A'\leq W_Y$, since $\sum_{i}E_i=I$, and $P=\sum_iPE_iP=\sum_{i\in Y}PE_iP$. Since $A'$ is maximal, we get that $A'=W_Y$.

 Conversely, among all subspaces on which each $F_k^\dag F_l$ operates as a scalar multiple take $W$ to be one of maximal dimension. Then with $P$ the projection operator onto $W$, it is easy to see that $PSP=\C P$ and its dimension is the quantum independent number of $S$ by the above argument.
\end{proof}

\section{Unitary entanglement-assisted independence number}

 We consider in this section a further generalization of the independence number.

 \begin{definition}The \emph{unitary entanglement-assisted independence number}, $\widetilde{\alpha}_U(S)$, is the largest $N$ such that $\rho\in\mathcal{S}(\eH_A)\leq L(\eH_A)$ and unitary matrices $U_1,\ldots, U_N$ such that $U_m\rho U_{m'}\in S^\perp$ for any $m\neq m'$.
 \end{definition}

 Here $\mathcal{S}(\eH_A)$ consists of density operators: Hermitian positive semidefinite matrix with trace $1$. It was shown in \cite{Duan12} that $\al_q(S) \leq \al(S) \leq \widetilde{\alpha}_U(S)$. The integer $\widetilde{\alpha}_U(S)$ is defined when the encoding modulation of the channel is only unitary. If we extend the channel to a tensor product space (in the case of $\widetilde{\alpha}_U(S)$ the extension is trivially $\mathbb{C}$), then we can define the entanglement-assisted independence number. This quantity is motivated by the scenario where sender and receiver share an entangled state beforehand.

The next statement is a useful technical lemma. The result is known, but we recall the proof for the sake of completeness:

\begin{lemma} For each positive integer $t$, there exist $t^2$ unitary $t\times t$ matrices that are orthogonal w.r.t. the  Hilbert-Schmidt norm.
\end{lemma}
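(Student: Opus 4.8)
The plan is to exhibit an explicit family of $t^2$ unitaries --- the discrete Weyl (generalized Pauli) operators --- and verify their orthogonality by a direct trace computation. Fix a primitive $t$-th root of unity $\omega=e^{2\pi i/t}$ and let $\{|0\ra,\ldots,|t-1\ra\}$ be the standard orthonormal basis of $\C^t$. First I would define the shift operator $X$ by $X|j\ra=|j+1\ra$ with indices taken modulo $t$, and the clock operator $Z$ by $Z|j\ra=\omega^j|j\ra$. Each is manifestly unitary ($X$ is a permutation matrix, $Z$ is diagonal with unimodular entries), so every product $X^aZ^b$ with $0\le a,b\le t-1$ is unitary. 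This yields exactly $t^2$ candidate matrices, indexed by the pairs $(a,b)$.

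The core step is to evaluate the Hilbert-Schmidt inner product $\tr\!\big(X^aZ^b(X^{a'}Z^{b'})^\dag\big)$. Using $(X^{a'}Z^{b'})^\dag=Z^{-b'}X^{-a'}$, and then cyclicity of the trace to slide $X^{-a'}$ next to $X^a$, the product collapses to the single Weyl operator $X^{a-a'}Z^{b-b'}$, so the inner product equals $\tr(X^pZ^q)$ with $p\equiv a-a'$ and $q\equiv b-b'\pmod t$. A direct inspection of the diagonal entries gives $\tr(X^pZ^q)=\delta_{p\equiv 0}\sum_{j=0}^{t-1}\omega^{qj}$, because $X^p$ sends $|j\ra$ to $|j+p\ra$ and hence contributes to the diagonal only when $p\equiv 0$. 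The geometric sum $\sum_{j}\omega^{qj}$ equals $t$ when $q\equiv 0$ and vanishes otherwise. Thus the inner product is $t$ when $(a,b)=(a',b')$ and $0$ for distinct pairs, which establishes pairwise orthogonality with respect to the Hilbert-Schmidt norm.

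The result is classical, so I do not expect a genuine obstacle; the only point requiring a little care is the bookkeeping in reducing $X^aZ^bZ^{-b'}X^{-a'}$ to the canonical form $X^{a-a'}Z^{b-b'}$ before taking the trace, together with the observation that any nonzero shift forces the trace to vanish. Counting the $t^2$ index pairs then completes the argument. If one additionally wants these matrices to close, up to a phase, into a projective group --- which is not needed for the statement here --- one records the Weyl commutation relation $ZX=\omega XZ$.
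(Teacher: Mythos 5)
Your proof is correct and follows essentially the same route as the paper: the paper's ``spin'' matrices $S_{j,k}=\sum_r e^{2\pi i jr/t}E_{r,r+k}$ from Pittenger--Rubin are exactly the discrete Weyl operators $Z^jX^{-k}$ (up to relabeling and phase), and the paper simply asserts their Hilbert--Schmidt orthogonality as ``routine'' while you carry out the trace computation explicitly. Your verification, including the cyclicity trick to reduce to $\tr(X^{a-a'}Z^{b-b'})$ and the vanishing geometric sum, is sound.
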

\begin{proof}
This is taken from \cite{Pittenger00}.  First, the adjusted basis $\{A_{j,k}:\,0\leq j,k\leq t\}$ is the set of $t\times t$ matrices defined by $A_{j,k}=E_{j,j+k}$, where the addition $+$ denotes addition modulo $t$. Then, we define the ``spin" matrices $S=\{S_{j,k}:0\leq j,k\leq t\}$ as follows:  $S_{j,k}=\sum_{r=0}^{t-1}F(j,r)A_{r,k}$ with $F(j,k)=exp(2\pi ijk/t)$. It is routine to show that these spin matrices are orthogonal w.r.t. the Hilbert-Schmidt norm.
\end{proof}

The proof technique of the following theorem is applicable to more general situations, but for simplicity and clarity, we just argue in the pseudocyclic association scheme setting.

\begin{thm} Suppose that our quantum channel satisfies the Hypothesis, and take the same notations as in Lemma \ref{lem_h}. Then, $\widetilde{\alpha}_U(S)\geq t^2d$ in the case $rank(E_0)=1$, $rank(E_1)=\cdots=rank(E_d)=t\geq 2$, where $D=dim_\C(\eH_A)$. In particular, $\widetilde{\alpha}_U(S)>\alpha(S)=1+td$.
\end{thm}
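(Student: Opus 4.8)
The plan is to exhibit explicitly a density operator $\rho$ together with $t^2 d$ unitaries $U_m$ satisfying $U_m\rho U_{m'}^\dagger\in S^\perp$ for all $m\neq m'$, and then to compare the resulting lower bound with the value $\al(S)=1+td$ furnished by the earlier theorem. The starting point is the description of $S$ from Lemma \ref{lem_h}: we have $S=span\{E_k:\,0\le k\le d\}$, so that $X\in S^\perp$ is equivalent to $\tr(E_kX)=0$ for every $k$. Since $E_k$ is the orthogonal projection onto $W_k$, this says precisely that every diagonal block $E_kXE_k$ of $X$, with respect to $\eH_A=W_0\oplus W_1\oplus\cdots\oplus W_d$, is traceless.

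First I would fix the state $\rho=\frac{1}{td}(I-E_0)=\frac{1}{td}\sum_{k=1}^{d}E_k$, a legitimate density operator of trace $1$ supported on $W':=W_1\oplus\cdots\oplus W_d$. If I require every $U_m$ to fix $W_0$ pointwise and to act as a unitary $V_m$ on $W'$, then $E_0U_m\rho U_{m'}^\dagger=0$ automatically, while for $k\ge 1$ one computes $\tr(E_kU_m\rho U_{m'}^\dagger)=\frac{1}{td}\tr(E_kV_mV_{m'}^\dagger)$. The problem thus reduces to producing $t^2d$ unitaries $V_m$ on the $(td)$-dimensional space $W'$ for which every diagonal block of $V_mV_{m'}^\dagger$ is traceless whenever $m\neq m'$, that is, $\tr(E_kV_mV_{m'}^\dagger)=0$ for all $1\le k\le d$.

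Second, I would use that the spaces $W_1,\ldots,W_d$ all have dimension $t$: fixing orthonormal bases identifies $W'$ with $\C^{t}\otimes\C^{d}$ so that $E_k=I_t\otimes|k\rangle\langle k|$. Let $Q_1,\ldots,Q_{t^2}$ be the $t^2$ Hilbert--Schmidt orthogonal unitaries supplied by the preceding lemma, so that $\tr(Q_sQ_{s'}^\dagger)=t\,\delta_{s,s'}$, and let $X$ be the cyclic shift $X|k\rangle=|k+1\rangle$ on $\C^d$. I would then set $V_{(s,a)}=Q_s\otimes X^a$ for $1\le s\le t^2$ and $0\le a\le d-1$, giving exactly $t^2d$ unitaries. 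Since $V_{(s,a)}V_{(s',a')}^\dagger=(Q_sQ_{s'}^\dagger)\otimes X^{a-a'}$ and $\langle k|X^{a-a'}|k\rangle=\delta_{a,a'}$ (as $a-a'\not\equiv 0\pmod d$ when $a\neq a'$), one obtains
\[
\tr\!\left(E_k\,V_{(s,a)}V_{(s',a')}^\dagger\right)=\tr(Q_sQ_{s'}^\dagger)\,\langle k|X^{a-a'}|k\rangle=t\,\delta_{s,s'}\,\delta_{a,a'},
\]
which vanishes for every $k$ as soon as $(s,a)\neq(s',a')$. Lifting each $V_{(s,a)}$ to $U_{(s,a)}=1\oplus V_{(s,a)}$ on $\eH_A$ and feeding this back into the reduction gives $U_m\rho U_{m'}^\dagger\in S^\perp$ for $m\neq m'$, whence $\widetilde{\alpha}_U(S)\ge t^2d$. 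Finally $\al(S)=dim_\C(\eH_A)=1+td$ by the earlier theorem, and $t^2d-(1+td)=td(t-1)-1\ge 1$ for $t\ge 2$, which yields the asserted strict inequality.

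The step I expect to be the crux is the second one. The naive attempt of using block-diagonal unitaries, acting only within each $W_k$, caps the count at $t^2$, because the traceless-diagonal-block condition would then force the $t\times t$ blocks to be pairwise Hilbert--Schmidt orthogonal. The extra factor $d$ is recovered only by allowing the unitaries to permute the eigenspaces $W_1,\ldots,W_d$ off-diagonally through $X^a$: the shift sends the relevant diagonal block of $V_mV_{m'}^\dagger$ to zero whenever the shift indices differ, so orthogonality need not be ``spent'' on those pairs. I would also remark that one cannot squeeze out a further factor by inserting a clock matrix $Z$ on $\C^d$, since $Z$ preserves each block and merely rescales it by a phase, leaving the diagonal blocks nonzero.
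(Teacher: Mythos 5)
Your proposal is correct: the reduction to traceless diagonal blocks, the tensor-product construction $Q_s\otimes X^a$, and the final comparison $t^2d>1+td$ for $t\geq 2$ all hold up. It shares the paper's two essential ingredients --- the Pittenger--Rubin lemma supplying $t^2$ Hilbert--Schmidt orthogonal unitaries, and the use of the $d$ nontrivial blocks $W_1,\ldots,W_d$ to gain the extra factor $d$ --- but the implementation is genuinely different. The paper takes $\rho$ of rank $t$ supported on an \emph{arbitrary} $t$-dimensional subspace (spanned by $w_1,\ldots,w_t$), and only prescribes the compression $B_m=W^\dag U_m V$ of each unitary: the unitary indexed by $(i,j)$ is designed to carry the single block $W_j$ onto the support of $\rho$ via $C_i$, with all other blocks sent into the orthogonal complement, and the $U_m$ are then obtained by completing these partial data to unitary matrices. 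Your version instead takes $\rho$ maximally mixed on $W'=W_1\oplus\cdots\oplus W_d$ and gives the unitaries in closed form as $1\oplus(Q_s\otimes X^a)$, with the cyclic shift permuting the blocks; the factor $d$ is extracted from the shift index rather than from the choice of target block. What your route buys is full explicitness --- no ``complete $B_m$ and $W$ into unitary matrices and solve for $U_m$'' step --- and a one-line verification via $\tr\bigl(E_k\,V_{(s,a)}V_{(s',a')}^\dag\bigr)=\tr(Q_sQ_{s'}^\dag)\,\langle k|X^{a-a'}|k\rangle$; the cost is that the tensor-product identification $W'\cong\C^t\otimes\C^d$ leans on all nontrivial blocks having equal dimension $t$, whereas the paper's block-matrix bookkeeping with general $n_k$ is set up before specializing, which is why the authors can remark that their technique applies in more general situations. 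One further point in your favor: you read the defining condition as $U_m\rho U_{m'}^\dag\in S^\perp$, which silently repairs the missing dagger in the paper's definition; the paper's own proof does the same (in the form $\tr(U_{m'}^\dag\rho U_m E_k)=0$), and since $S=S^\dag$, passing between the two conventions only replaces the family $\{U_m\}$ by $\{U_m^\dag\}$, so the parameter is unchanged.
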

\begin{proof}
Define  $S_e:=\{E_0,\ldots, E_d\}$, so that $S\leq S_e$. Take $\rho\in\mathcal{S}(\eH_A)\leq \mathcal{L}(\eH_A)$  and unitary matrices $U_1,\ldots, U_N$ such that $U_m\rho U_{m'}\in S_e^\perp\leq S^\perp$ for any $m\neq m'$.

We first introduce two sets of orthonormal basis of $\eH_A$ as follows:
\begin{enumerate}
\item Take any orthonormal basis $w_1,\ldots, w_D$ of $\eH_A$, and define
$\rho:=\sum_{i=1}^t|w_i\rangle\langle w_i|\in\mathcal{S}(\eH_A)$. Let $W$ be the $D\times t$ matrix whose $i$-th column is $|w_i\rangle$, $1\leq i\leq t$.
\item For each $E_k$, we take an orthonormal basis $v_{k,1},\ldots, v_{k,n_k}$ of $W_k$ such that $E_k=\sum_{i=1}^{n_k}|v_{k,i}\rangle\langle v_{k,i}|$, where $n_k=rank(E_K)$.  Write $\{v_1,\ldots v_D\}:=\{v_{0,1},v_{0,2},\ldots, v_{d,n_d}\}$ in the natural order, and denote by $V$ the $D\times D$ matrix whose $i$-th row is the vector $|v_i\rangle$.
\end{enumerate}

The condition $U_m\rho U_{m'}\in S_e^\perp$ is equivalent to $\tr(U_{m'}^\dag\rho U_m E_k)=0$. With the above definition of $\rho$, this condition translates to $\sum_{l=1}^{n_k}\sum_{i=1}^t\langle w_i|U_{m}|v_{k,l}\rangle \langle v_{k,l}|U_{m'}^\dag|w_i\rangle=0$, or equivalently
\[
\sum_{l=1}^{n_k}\sum_{i=1}^t\langle w_i|U_{m}|v_{k,l}\rangle\overline{\langle w_i|U_{m'}|v_{k,l}\rangle}=0,\;\forall\; 0\leq k\leq d,\; m\neq m'.
\]

Now for each $m$, we define a $t\times D$ matrix $B_m:=W^\dag U_{m} V$. It is easy to check that $B_mB_m^\dag$ is the $t\times t$ identity matrix.
We write $B_m$ in the block matrix form $[B_m(0),B_m(1),\ldots,B_m(d)]$, where $B_m(k)$ is a $t\times n_k$ matrix. The above condition then translates to
\begin{equation}\label{b_cond}
\tr(B_{m'}(k)^\dag B_{m}(k))=0,\;\forall\; 0\leq k\leq d,\; m\neq m'.
\end{equation}

We take a maximal set  of unitary $t\times t$ matrices $\{C_{0},\ldots, C_{t^2-1}\}$ that are orthogonal w.r.t. the Hilbert-Schmidt norm. Now we construct a set of $t(D-1)=t^2d$ unitary matrices $U_m$ with the desired property. To do this, we first
define $t^2d$ of $t\times D$ matrices $B_m$'s as follows:  we set $B_{i+jt^2}(0)=0$, $B_{i+jt^2}(k)=C_i$ if $k=j$ and $=0$ otherwise, for each $0\leq i\leq t^2-1$, $1\leq j\leq d$. It is clear that Eqn. \eqref{b_cond} is satisfied. Now it is easy to find $U_m$ such that $B_m=W^\dag U_{m} V$: complete $B_m$'s and $W$ into unitary matrices and solve $U_m$. By our previous argument, these $U_m$'s have the desired property.
\end{proof}

\noindent{\bf Remark:} In the pseudocyclic association scheme case, we have $rank(E_0)=1$, and all other $rank(E_1)=\cdots=rank(E_d)=t$. Then the above theorem shows that  $\widetilde{\alpha}_U(S)\geq dt^2$, while the dimension of the ambient space is $D=1+dt$. If we fix $d$ and let $t$ tends to infinity, then asymptotically we shall have a sequence of quantum channels with $\frac{\widetilde{\alpha}_U(S)}{D^2}$ converging to some constant between $\frac{1}{d}$ and $1$. This can be realized by choosing proper cyclotomic schemes with a fixed $d$ as introduced at the beginning of this paper.\\

\noindent{\bf Remark:} We know that $\widetilde{\alpha}_U(S) \leq \widetilde{\alpha}(S)$. The above theorem gives then a separation between the operationally meaningful parameters  $\alpha(S)$ and $\widetilde{\alpha}(S)$.

\section{Open problems}

We conclude with some mathematical open problems:

\begin{enumerate}
\item For quantum channels from association schemes, can we give a better upper bound on $\widetilde{\alpha}_U(S)$ than the trivial bound $1+dim_\C S^\perp$ as in \cite{Duan12}?

\item For the pseudocyclic scheme with a fixed number of classes, describe the asymptotical behavior of the ratio $\frac{\widetilde{\alpha}_U(S)}{D^2}$, which we know is between $1/d$ and $1$, where $d$ is the number of classes.

\item What can we say about the entanglement-assisted capacity of the channels discussed in this paper apart from the lower bound $\widetilde{\alpha}_U(S)$? What can we say about the asymptotic behaviour of this capacity?

\item The independence numbers discussed here can be interpreted as new association schemes parameters. What are the properties of an association scheme $\mathcal{A}$ responsible for a separation between $\alpha(S_{\mathcal{A}})$ and $\widetilde{\alpha}_U(S_{\mathcal{A}})$. Classify association schemes on the basis of such properties.\\

\end{enumerate}

\noindent \emph{Acknowledgments.} Part of this work has been done while the authors were participating in the \emph{2012 Shanghai Conference on Algebraic Combinatorics}, held at Shanghai Jiao Tong University (SJTU). We would like to thank Eiichi Bannai for useful discussion and interest on the topic. SS is grateful to the Institute of Natural Sciences at SJTU for financial support and kind hospitality.

\end{document}